\newcommand{\etc}{{\em etc}}
\newcommand{\secref}[1]{Section~\ref{#1}}
\newcommand{\thrmref}[1]{Theorem~\ref{#1}}
\newtheorem{thrm}{\textbf{Theorem}}
\newcommand{\diag}{\mathrm{diag}}
\newcommand{\norm}[1]{\Vert#1\Vert}
\newenvironment{proof}[1][Proof]{\begin{trivlist}
\item[\hskip \labelsep {\bfseries #1}]}{\end{trivlist}}
\newcommand{\qed}{\nobreak \ifvmode \relax \else
      \ifdim\lastskip<1.5em \hskip-\lastskip
      \hskip1.5em plus0em minus0.5em \fi \nobreak
      \vrule height0.75em width0.5em depth0.25em\fi}
\begin{document}
\title{Reduced-rank Analysis of the Total Least Squares}
\author{
\authorblockN{Nagananda Kyatsandra}
\authorblockA{LTCI, T\'el\'ecom ParisTech,\\ Institut Mines - T\'el\'ecom,\\
Paris 75013, France.\\
Email: \texttt{nkyatsandra@enst.fr}}
\and
\authorblockN{Pramod P. Khargonekar}
\authorblockA{Department of EECS,\\ University of California,\\
Irvine, CA 92697, USA.\\
Email: \texttt{pramod.khargonekar@uci.edu}}
}

\maketitle
\thispagestyle{empty}
\pagestyle{empty}

\begin{abstract}
The reduced-rank method exploits the distortion-variance tradeoff to yield superior solutions for classic problems in statistical signal processing such as parameter estimation and filtering. The central idea is to reduce the variance of the solution at the expense of introducing a little distortion. In the context of parameter estimation, this yields an estimator whose sum of distortion plus variance is smaller than the variance of its undistorted counterpart. The method intrinsically results in an ordering mechanism for the singular vectors of the system matrix in the measurement model used for estimating the parameter of interest.  According to this ordering rule, only a few \emph{dominant} singular vectors need to be selected to construct the reduced-rank solution while the rest can be discarded. The reduced-rank estimator is less sensitive to measurement errors. In this paper, we attempt to derive the reduced-rank estimator for the total least squares (TLS) problem, including the order selection rule. It will be shown that, due to the inherent structure of the problem, it is not possible to exploit the distortion-variance tradeoff in TLS formulations using existing techniques, except in some special cases. This result has not been reported previously and warrants a new viewpoint to achieve rank reduction for the TLS estimation problem. The work is motivated by the problems arising in practical applications such as channel estimation in wireless communication systems and time synchronization in wireless sensor networks. 
\end{abstract}


\section{Introduction}\label{sec:introduction}
The underlying philosophy of rank reduction in statistical signal processing is to prioritize highly informative content of the measurement matrix in order to draw inferences that are superior to those obtained without reducing the rank of the matrix \cite[Chapter 9.9]{Scharf1991a}. The method brings into light the fundamental distortion-variance tradeoff which can be exploited to improve the performance of standard inference procedures. The definitions of distortion and variance depend on the problem at hand. For example, for parameter estimation using the linear least squares (LS), it has been shown that by introducing bias, rank reduction decreases the variance of the LS estimator \cite{Thorpe1989}, \cite{Cook1999}. However, the sum of bias plus variance is smaller than the variance of the unbiased estimator, thereby improving the overall mean-squared error performance.

Rank reduction for estimation entails prioritizing, or arranging, the products of each singular vector of the measurement matrix and the observation vector in the decreasing order of their magnitudes, and selecting only a few of these singular vectors to construct a reduced-rank estimator. As it happens, the order-selection rule emerges naturally when the bias-variance tradeoff is exploited. Eliminating a few singular vectors reduces the complexity of the prior model leading to a lower mean-squared error of the estimator, and at the same time decreases the sensitivity to measurement errors. It has to be emphasized that, though rank reduction involves singular value decomposition (SVD) of the system matrix, the computational issue is not the topic of concern; the main goal is to seek an estimator with improved overall performance.

Reduced-rank LS estimator of \cite{Thorpe1989} assumes complete knowledge of the measurement matrix.  In many applications ranging from channel estimation in wireless communication systems and sensor networks, automatic control, finance, computational biology, {\etc}, it is generally the case that not only are the observations noisy, but the elements of the measurement matrix are also corrupted by noise. This type of modeling is referred to as the total least squares (TLS), and is a powerful extension of the LS idea which corresponds to partial modification of the data \cite{Huffel1991}. In the statistical community, TLS is referred to as errors-in-variables (EIV) modeling, orthogonal regression \cite{Huffel2002}, or forward stepwise regression \cite{Cook1999}. 

To the best of our knowledge, the reduced-rank analysis of the problem of parameter estimation using the TLS formulation has not been reported in the literature, and is the subject of this paper. The main result of the paper is given in  \thrmref{thrm:reduced_tls}, which essentially states that, owing to the inherent structure of the formulation, it is in general not possible to exploit the bias-variance tradeoff for the TLS problem. Hence, the classical approach of deriving the reduced-rank estimator to the LS problem does not yield a reduced-rank estimator to the TLS setup. However, for some special cases (for example, when the unknown parameter to be estimated is norm-constrained), a reduced-rank estimator does exist. For this special case, the corresponding order selection rule is similar in spirit to the one derived for the reduced-rank LS estimator with some minor technical differences which will be highlighted.

In \secref{sec:reduced_ls}, we provide references to existing literature in the area of rank reduction and re-derive the reduced-rank LS estimator. In \secref{sec:reduced_tls}, we show that it is not possible to exploit the bias-variance tradeoff for the TLS formulation. Some remarks can be found in \secref{sec:conclusion}. Notation: Vectors and matrices are denoted by bold lower and upper case letters, respectively. The expectation operator is denoted by $\mathbb{E}[\cdot]$, while $\norm{\cdot}$ denotes the Euclidean norm. The transpose of a matrix $\bm{M}$ is denoted by $\bm{M}^{\mathrm{T}}$. The central Chi square distribution with $k$ degrees of freedom is denoted by $\chi^2_{k}$. 

\section{Existing results and the reduced-rank LS estimator}\label{sec:reduced_ls}
\subsection{Literature review}
Reduced-rank solutions to several standard problems in signal processing have appeared in the literature. A few examples include reduced-rank solutions to stationary time series modeling, stationary time series whitening, and vector quantization \cite{Scharf1987}. Rank reduction leads to a decreased complexity of decoding linear block codes in a complex field \cite{Scharf1987a}. A comprehensive coverage of the reduced-rank processing can be found in \cite{Scharf1991}, where several linear signal models including the LS, Wiener filters, minimum variance distortionless look beamformer, block quantizers, Gauss-Gauss detectors have been considered. The bias-variance tradeoff for low-rank estimators of higher order statistics using a tensor product formulation for the moments and cumulants is analyzed in \cite{Andre1997}. Novel rank selection schemes for adaptive filtering and space-time adaptive processing are presented in \cite{Goldstein1997} and \cite{Guerci2000}. 

Subspace-based methods for computing the optimal reduced-rank estimators and filters are analyzed in \cite{Hua2001}. Optimal rank selection for multistage Wiener filter is presented in \cite{Hiemstra2002}. Rank reduction for complex random vectors and wide-sense stationary signals using principal components is derived in \cite{Schreier2003}. Adaptive reduced-rank processing for communication systems can be found in \cite{Honig2001} - \nocite{Nicoli2005}\nocite{deLamare2009}\nocite{deLamare2010}\cite{deLamare2011}. Maximum likelihood estimation for reduced-rank linear regression is presented in \cite{Stoica1996}. The low-rank approximation of a matrix by one of the same dimension but smaller rank is derived in \cite{Manton2003}. Finally, it is of interest to note that the ordering principle emerging out of rank reduction has inspired several important directions in statistical signal processing. For example, ordering observations opened up a new perspective on finite sample analysis of signal detection, classification and estimation, where optimal performance was achieved using fewer than half the number of available samples \cite{Blum2008} - \nocite{Blum2011}\cite{Sriranga2018a}. 

The main idea of rank reduction in the context of statistical signal processing was reported in the seminal work of Thorpe and Scharf for parameter estimation using LS in \cite{Thorpe1989}, and is reproduced below. 

\subsection{Reduced-rank LS estimator}\label{subsec:reduced_ls}
In this subsection, we re-derive in significant detail the reduced-rank LS estimator developed in \cite{Thorpe1989}, and set the stage for the TLS model considered in \secref{sec:reduced_tls}. We begin with the following linear model:
\begin{eqnarray}
\begin{split}
\bm{y} &= \bm{x} + \bm{n} \\
&= \bm{H}\bm{\theta} + \bm{n},
\end{split}
\label{eq:linear_model}
\end{eqnarray}
where $\bm{y}$, $\bm{x}$ and $\bm{n}$ are $N \times 1$ dimensional vectors, $\bm{H}$ is a known $N \times p$ dimensional matrix and $\bm{\theta}$ is a $p \times 1$ dimensional unknown vector. When the $p$ columns of $\bm{H}$ are linearly independent, then there will be only $(N-p)$ linearly independent vectors that can be orthogonal to the $p$ columns of $\bm{H}$. The noise vector $\bm{n} \sim \mathcal{N}(\bm{0}_N, \sigma^2\bm{I}_N)$, where $\bm{0}_N$ and $\bm{I}_N$ denote the $N \times 1$ vector of all zeros and the $N \times N$ identity matrix, respectively, while $\sigma^2$ denotes the noise variance. Therefore, $\bm{y} \sim \mathcal{N}(\bm{H}\bm{\theta}, \sigma^2\bm{I}_N)$. The least squares estimates of the parameter $\bm\theta$, the signal $\bm{x}$ and noise $\bm{n}$ are denoted by $\hat{\bm\theta}_{\text{LS}}$, $\hat{\bm{x}}_{\text{LS}}$ and $\hat{\bm{n}}_{\text{LS}}$, respectively. These are given by 
\begin{eqnarray}
\hat{\bm\theta}_{\text{LS}} &=& (\bm{H}^{\mathrm{T}}\bm{H})^{-1}\bm{H}^{\mathrm{T}}\bm{y}, \label{eq:ls_theta} \\
\hat{\bm{x}}_{\text{LS}} &=& \bm{H}(\bm{H}^{\mathrm{T}}\bm{H})^{-1}\bm{H}^{\mathrm{T}}\bm{y}, \label{eq:ls_x} \\
\nonumber \hat{\bm{n}}_{\text{LS}} &=& \bm{y} - \hat{\bm{x}} \\
&=& [\bm{I}_N -  \bm{H}(\bm{H}^{\mathrm{T}}\bm{H} )^{-1}\bm{H}^{\mathrm{T}}]\bm{y}, \label{eq:ls_n}
\end{eqnarray}
and are unique if the inverse of $\bm{H}^{\mathrm{T}}\bm{H}$ exists. The estimators $\hat{\bm\theta}_{\text{LS}}$, $\hat{\bm{x}}_{\text{LS}}$ and $\hat{\bm{n}}_{\text{LS}}$ are linear transformations on the multivariate normal random vector $\bm{y}$, governed by the following distributions:
\begin{eqnarray}
\hat{\bm\theta}_{\text{LS}} &\sim& \mathcal{N}\left(\bm{\theta}, \sigma^2(\bm{H}^{\mathrm{T}}\bm{H})^{-1}\right), \label{eq:ls_theta_dist} \\
\hat{\bm{x}}_{\text{LS}} &\sim & \mathcal{N}\left(\bm{H}\bm{\theta}, \sigma^2\bm{H}(\bm{H}^{\mathrm{T}}\bm{H})^{-1}\bm{H}^{\mathrm{T}}\right), \label{eq:ls_x_dist} \\
\hat{\bm{n}}_{\text{LS}} &\sim & \mathcal{N}\left(\bm{0}_N, \sigma^2[\bm{I}_N -  \bm{H}(\bm{H}^{\mathrm{T}}\bm{H})^{-1}\bm{H}^{\mathrm{T}}]\right) \label{eq:ls_n_dist}.
\end{eqnarray}
From \eqref{eq:ls_x_dist} it can be seen that $\hat{\bm{x}}_{\text{LS}}$ is an unbiased estimator of $\bm{x}$. The squared error $\hat{\bm{x}}_{\text{LS}}^{\mathrm{T}}\hat{\bm{x}}_{\text{LS}} \sim \sigma^2 \chi^2_{p}$, and the mean squared error of the estimator $\hat{\bm{x}}_{\text{LS}}$ is $\mathbb{E} [(\hat{\bm{x}}_{\text{LS}} - \bm{x})^2] = p\sigma^2$. 

In reduced-rank processing, we will seek to achieve a smaller mean-squared error than $p\sigma^2$ albeit at the price of nonzero mean. This corresponds to a bias-variance tradeoff. Towards this end, we first express \eqref{eq:ls_theta_dist} - \eqref{eq:ls_n_dist} in terms of the SVD of the matrix $\bm{H}$ which is given by $\bm{H} = \bm{U}_H \bm{\Gamma}_H \bm{V}^{\mathrm{T}}_H$, where $\bm{U}_H = \left[\bm{u}_1, \dots, \bm{u}_p \right] \in \mathbb{R}^{N \times p}$ and $\bm{V}_H = \left[\bm{v}_1, \dots, \bm{v}_p \right] \in \mathbb{R}^{p \times p}$ are orthogonal matrices, and $\bm{\Gamma}_H = \diag \left[\gamma_1, \dots, \gamma_p \right] \in \mathbb{R}^{p \times p}$ is a diagonal matrix comprising the singular values $\gamma_1 \geq \dots \geq \gamma_p$. Thus, \eqref{eq:ls_theta_dist} - \eqref{eq:ls_n_dist}  can be written as 
\begin{eqnarray}
\hat{\bm\theta}_{\text{LS}} &\sim& \mathcal{N}\left(\bm{\theta}, \sigma^2\bm{V}_H\bm{\Gamma}^{-2}_H\bm{V}^{\mathrm{T}}_H\right), \label{eq:ls_x_dist_svd} \\
\hat{\bm{x}}_{\text{LS}} &\sim & \mathcal{N}\left(\bm{H}\bm{\theta}, \sigma^2\bm{U}_H\bm{U}^{\mathrm{T}}_H\right), \label{eq:ls_x_dist_svd} \\
\hat{\bm{n}}_{\text{LS}} &\sim & \mathcal{N}\left(\bm{0}_N, \sigma^2\left[\bm{I}_N - \bm{U}_H\bm{U}^{\mathrm{T}}_H\right]\right). \label{eq:ls_n_dist_svd}
\end{eqnarray}

The full-rank estimator $\hat{\bm{x}}_{\text{LS}}$ will be replaced by a low-rank estimator 
\begin{eqnarray}
\hat{\bm{x}}_{r, {\text{LS}}} \triangleq \bm{U}_r\bm{U}^{\mathrm{T}}_r\bm{y},
\label{eq:reduced_ls}
\end{eqnarray}
where the $N \times r$ matrix $\bm{U}_r = [\bm{u}_{(1)}, \dots, \bm{u}_{(r)}]$ is obtained by discarding $(p - r)$ orthogonal vectors that comprise $\bm{U}_H$ and $\bm{u}_{(j)}$ denotes the $j^{\text{th}}$ ``ordered'' orthogonal vector which is not necessarily the $j^{\text{th}}$ vector. The notion of ordering will become clearer as we proceed. The estimation error $(\bm{x} - \hat{\bm{x}}_{r, {\text{LS}}}) \sim \mathcal{N}(\bm{x} - \bm{x}_r, \sigma^2\bm{U}_r\bm{U}^{\mathrm{T}}_r)$, where $(\bm{x} - \bm{x}_r) = \bm{b}_{\text{LS}}$ denotes the bias of the LS estimator. The mean-squared error of the reduced-rank estimator $\hat{\bm{x}}_{r, {\text{LS}}}$ is given by
\begin{eqnarray}
\nonumber \text{mse}(r)\!\!\! &=&\!\!\! \mathbb{E}\left\{[\bm{x} - \hat{\bm{x}}_{r,\text{LS}}]^{\mathrm{T}}[\bm{x} - \hat{\bm{x}}_{r,\text{LS}}] \right\} \\
\nonumber  &=&\!\!\! \mathbb{E}\left\{[\bm{x}^{\mathrm{T}}\bm{x} - \bm{x}^{\mathrm{T}}\hat{\bm{x}}_{r,\text{LS}} - \hat{\bm{x}}_{r,\text{LS}}^{\mathrm{T}}\bm{x} - \hat{\bm{x}}_{r,\text{LS}}^{\mathrm{T}}\hat{\bm{x}}_{r,\text{LS}}] \right\} \\
\nonumber  &=&\!\!\! \mathbb{E}\left\{\bm{x}^{\mathrm{T}}\bm{x}\right\} - \mathbb{E}\left\{\bm{x}^{\mathrm{T}}\hat{\bm{x}}_{r,\text{LS}} \right\} - \mathbb{E}\left\{\hat{\bm{x}}_{r,\text{LS}}^{\mathrm{T}}\bm{x} \right\} \\ \nonumber && +  \mathbb{E}\left\{\hat{\bm{x}}_{r,\text{LS}}^{\mathrm{T}}\hat{\bm{x}}_{r,\text{LS}} \right\} \\
\nonumber  &=&\!\!\! \bm{x}^{\mathrm{T}}\bm{x} - \bm{x}^{\mathrm{T}}\bm{x}_r - \bm{x}_r^{\mathrm{T}}\bm{x}  +  \mathbb{E}\left\{\hat{\bm{x}}_{r,\text{LS}}^{\mathrm{T}}\hat{\bm{x}}_{r,\text{LS}} \right\}\\
\nonumber  &=&\!\!\! \bm{x}^{\mathrm{T}}\bm{x} - \bm{x}^{\mathrm{T}}\bm{x}_r - \bm{x}_r^{\mathrm{T}}\bm{x}  +  \mathbb{E}\left\{\bm{y}^{\mathrm{T}}\bm{U}_r\bm{U}^{\mathrm{T}}_r\bm{U}_r\bm{U}^{\mathrm{T}}_r\bm{y}\right\}\\
\nonumber  &=&\!\!\! \bm{x}^{\mathrm{T}}\bm{x} - \bm{x}^{\mathrm{T}}\bm{x}_r - \bm{x}_r^{\mathrm{T}}\bm{x} \\ \nonumber  && +  \mathbb{E}\left\{(\bm{x} + \bm{n})^{\mathrm{T}}\bm{U}_r\bm{U}^{\mathrm{T}}_r(\bm{x} + \bm{n})\right\}\\
\nonumber  &=&\!\!\! \bm{x}^{\mathrm{T}}\bm{U}_H\bm{U}^{\mathrm{T}}_H\bm{x} - \bm{x}^{\mathrm{T}}\bm{U}_r\bm{U}^{\mathrm{T}}_r\bm{x} - \bm{x}^{\mathrm{T}}\bm{U}_r\bm{U}^{\mathrm{T}}_r\bm{x} \\ \nonumber  &&+ \mathbb{E}\left\{\bm{x}^{\mathrm{T}} \bm{U}_r\bm{U}^{\mathrm{T}}_r\bm{x} +  \bm{x}^{\mathrm{T}} \bm{U}_r\bm{U}^{\mathrm{T}}_r\bm{n}\right.\\ \nonumber  && \left. + \bm{n}^{\mathrm{T}} \bm{U}_r\bm{U}^{\mathrm{T}}_r\bm{x}  + \bm{n}^{\mathrm{T}} \bm{U}_r\bm{U}^{\mathrm{T}}_r\bm{n}\right\} \\
\nonumber  &=&\!\!\! \bm{x}^{\mathrm{T}}\bm{U}_H\bm{U}^{\mathrm{T}}_H\bm{x} - \bm{x}^{\mathrm{T}}\bm{U}_r\bm{U}^{\mathrm{T}}_r\bm{x} - \bm{x}^{\mathrm{T}}\bm{U}_r\bm{U}^{\mathrm{T}}_r\bm{x} \\ \nonumber  && + \bm{x}^{\mathrm{T}} \bm{U}_r\bm{U}^{\mathrm{T}}_r\bm{x}  + \mathbb{E}\left\{\bm{n}^{\mathrm{T}} \bm{U}_r\bm{U}^{\mathrm{T}}_r\bm{n}\right\} \\
\nonumber  &=&\!\!\! \bm{x}^{\mathrm{T}}\bm{U}_H\bm{U}^{\mathrm{T}}_H\bm{x} - \bm{x}^{\mathrm{T}}\bm{U}_r\bm{U}^{\mathrm{T}}_r\bm{x} + \mathbb{E}\left\{\bm{n}^{\mathrm{T}} \bm{U}_r\bm{U}^{\mathrm{T}}_r\bm{n}\right\} \\
\nonumber  &=&\!\!\!\!\! \bm{x}^{\mathrm{T}}\bm{U}_H\bm{U}^{\mathrm{T}}_H\bm{x} - \bm{x}^{\mathrm{T}}\bm{U}_r\bm{U}^{\mathrm{T}}_r\bm{x} + \text{tr}\left[\mathbb{E}\left\{\bm{n}\bm{U}_r\bm{U}^{\mathrm{T}}_r\bm{n}^{\mathrm{T}}\right\}\right] \\
&=&\!\!\! \sum_{j = r+1}^{p}\norm{\bm{u}^{\mathrm{T}}_{(j)}\bm{x}}^2 + r \sigma^2.
\label{eq:mse_r}
\end{eqnarray}

\begin{thrm} 
Consider the LS model in \eqref{eq:linear_model} and the reduced-rank LS estimator given by \eqref{eq:reduced_ls}. There exists an $r = r^{\ast}$ given by 
\begin{eqnarray}
r^{\ast} = \arg\min_{r} \left[\sum_{j = r+1}^{p}\norm{\bm{u}^{\mathrm{T}}_{(j)}\bm{y}}^2 + \sigma^2(2r - p)\right],
\end{eqnarray}
that minimizes the mean squared error between $\bm{x}$ and $\hat{\bm{x}}_{r, {\text{LS}}}$.
\label{thrm:reduced_ls}
\end{thrm}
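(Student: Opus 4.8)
The plan is to turn the mean-squared error $\text{mse}(r)$ computed in \eqref{eq:mse_r}, namely $\sum_{j=r+1}^{p}\norm{\bm{u}^{\mathrm{T}}_{(j)}\bm{x}}^2 + r\sigma^2$, into a quantity that can actually be evaluated and minimized from the observed data. The obstruction is that the bias term depends on the unknown signal $\bm{x} = \bm{H}\bm{\theta}$, so the true mse is not available to the estimator and a data-dependent surrogate is required. First I would fix the meaning of the ``ordered'' vectors $\bm{u}_{(j)}$: for a fixed rank $r$ the variance contribution $r\sigma^2$ is insensitive to \emph{which} $r$ columns are retained, so the residual bias $\sum_{j=r+1}^{p}\norm{\bm{u}^{\mathrm{T}}_{(j)}\bm{x}}^2$ is minimized precisely by discarding the modes carrying the smallest signal energies $\norm{\bm{u}^{\mathrm{T}}_{(j)}\bm{x}}^2$. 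This fixes the ordering and shows that the tail sum indexed from $j = r+1$ to $p$ collects the $(p-r)$ smallest per-mode contributions.

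Next I would construct an estimate of each per-mode energy from $\bm{y}$. Using $\bm{y} = \bm{x} + \bm{n}$ together with the orthonormality of the singular vectors, each scalar projection satisfies $\bm{u}^{\mathrm{T}}_{(j)}\bm{y} \sim \mathcal{N}(\bm{u}^{\mathrm{T}}_{(j)}\bm{x}, \sigma^2)$, since $\bm{u}^{\mathrm{T}}_{(j)}\bm{n} \sim \mathcal{N}(0, \sigma^2)$. Squaring and taking expectations yields $\mathbb{E}[\norm{\bm{u}^{\mathrm{T}}_{(j)}\bm{y}}^2] = \norm{\bm{u}^{\mathrm{T}}_{(j)}\bm{x}}^2 + \sigma^2$, so $\norm{\bm{u}^{\mathrm{T}}_{(j)}\bm{y}}^2 - \sigma^2$ is an unbiased surrogate for $\norm{\bm{u}^{\mathrm{T}}_{(j)}\bm{x}}^2$. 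Because this noise floor $\sigma^2$ is identical across modes, the ordering by the observed energies $\norm{\bm{u}^{\mathrm{T}}_{(j)}\bm{y}}^2$ agrees in expectation with the ordering by the true energies, so the subspace selection of the previous step can be carried out entirely on observable quantities.

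Substituting the surrogate into \eqref{eq:mse_r}, the bias term becomes $\sum_{j=r+1}^{p}\norm{\bm{u}^{\mathrm{T}}_{(j)}\bm{y}}^2 - (p-r)\sigma^2$, and adding the variance $r\sigma^2$ collapses the noise terms into $\sigma^2(2r - p)$, yielding the data-based criterion $\sum_{j=r+1}^{p}\norm{\bm{u}^{\mathrm{T}}_{(j)}\bm{y}}^2 + \sigma^2(2r - p)$. Since $r$ ranges over the finite set $\{0, 1, \dots, p\}$, this criterion attains its minimum and the minimizer $r^{\ast}$ in the statement exists; the existence claim is thus immediate once the criterion is in hand, and the real content is the derivation of its particular form.

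The hard part will be justifying the passage from the unobservable $\text{mse}(r)$ to this estimable version. I would make it rigorous by verifying that the criterion is an \emph{unbiased} estimator of the true mse: taking the expectation of $\sum_{j=r+1}^{p}\norm{\bm{u}^{\mathrm{T}}_{(j)}\bm{y}}^2 + \sigma^2(2r-p)$ and using $\mathbb{E}[\norm{\bm{u}^{\mathrm{T}}_{(j)}\bm{y}}^2] = \norm{\bm{u}^{\mathrm{T}}_{(j)}\bm{x}}^2 + \sigma^2$ reproduces exactly $\sum_{j=r+1}^{p}\norm{\bm{u}^{\mathrm{T}}_{(j)}\bm{x}}^2 + r\sigma^2 = \text{mse}(r)$. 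This confirms that the penalty $\sigma^2(2r-p)$ is precisely the term that cancels the aggregate noise floor $(p-r)\sigma^2$ carried by the surrogate, so that minimizing the observable criterion targets the same rank that minimizes the true mean-squared error.
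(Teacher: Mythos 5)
Your proposal is correct and takes essentially the same approach as the paper: both replace the unobservable bias term $\sum_{j=r+1}^{p}\norm{\bm{u}^{\mathrm{T}}_{(j)}\bm{x}}^2$ by the observed projection energies corrected for the noise floor, and both identify $\sigma^2(2r-p)$ as exactly the penalty that makes the resulting criterion an unbiased estimate of $\text{mse}(r)$. The only difference is presentational --- you work mode by mode with the scalar surrogates $\norm{\bm{u}^{\mathrm{T}}_{(j)}\bm{y}}^2 - \sigma^2$, whereas the paper forms the vector bias estimator $\hat{\bm{b}}_{\text{LS}} = (\bm{U}_H\bm{U}^{\mathrm{T}}_H - \bm{U}_r\bm{U}^{\mathrm{T}}_r)\bm{y}$ and obtains the same aggregate correction $\sigma^2(p-r)$ from the trace of its covariance.
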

\begin{proof}
The estimator $\hat{\bm{x}}_{r, {\text{LS}}} \sim \mathcal{N}(\bm{x}_r, \sigma^2\bm{U}_r\bm{U}^{\mathrm{T}}_r)$, where $\bm{x}_r = \bm{U}_r\bm{U}^{\mathrm{T}}_r\bm{x}$ is the projection of $\bm{x}$ onto the span of $\bm{U}_r\bm{U}^{\mathrm{T}}_r$. The rank reduction procedure will reduce the variance of the estimator of $\bm{x}$ whenever 
\begin{eqnarray}
p\sigma^2  > \sum_{j = r+1}^{p}\norm{\bm{u}^{\mathrm{T}}_{(j)}\bm{x}}^2 + r \sigma^2,
\label{eq:reduce__ls_when}
\end{eqnarray}
which suggests that the optimum choice of the rank $r$ is 
\begin{eqnarray}
\begin{split}
r^{\ast} &= \arg\min_{r} \text{mse}(r) \\
&= \arg\min_{r} \sum_{j = r+1}^{p}\norm{\bm{u}^{\mathrm{T}}_{(j)}\bm{x}}^2 + r \sigma^2.
\end{split}
\label{eq:optimum_r}
\end{eqnarray}
However, since the signal vector $\bm{x}$ is unknown, we replace $\text{mse}(r)$ with its estimate to solve the setup in \eqref{eq:optimum_r}. Towards this end, we first estimate the bias $\bm{b}_{\text{LS}}$ using the following statistic:
\begin{eqnarray}
\begin{split}
\hat{\bm{b}}_{\text{LS}}  &= \left(\bm{U}_H\bm{U}^{\mathrm{T}}_H - \bm{U}_r\bm{U}^{\mathrm{T}}_r\right)\bm{y}  \\ &
\sim \mathcal{N}\left(\bm{b}_{\text{LS}}, \sigma^2(\bm{U}_H\bm{U}^{\mathrm{T}}_H - \bm{U}_r\bm{U}^{\mathrm{T}}_r) \right).
\end{split}
\label{eq:est_bias_ls}
\end{eqnarray}
The mean-squared error of the estimator $\hat{\bm{b}}_{\text{LS}}$ is given by 
\begin{eqnarray}
\nonumber \mathbb{E}\{[\hat{\bm{b}}_{\text{LS}} - \bm{b}_{\text{LS}}]^{\mathrm{T}}[\hat{\bm{b}}_{\text{LS}} - \bm{b}_{\text{LS}}]\}\!\!\!\!\!\!\!  &=& \!\!\!\!\!  \mathbb{E}\{\hat{\bm{b}}_{\text{LS}}^{\mathrm{T}}\hat{\bm{b}}_{\text{LS}} -  \hat{\bm{b}}_{\text{LS}}^{\mathrm{T}}\bm{b}_{\text{LS}}  - \bm{b}^{\mathrm{T}}_{\text{LS}}\hat{\bm{b}}_{\text{LS}}\\ \nonumber && + \bm{b}^{\mathrm{T}}_{\text{LS}}\bm{b}_{\text{LS}}\} \\
\nonumber &\stackrel{(i)}=& \!\!\!\!\! \mathbb{E}\{\hat{\bm{b}}_{\text{LS}}^{\mathrm{T}}\hat{\bm{b}}_{\text{LS}} \} - \bm{b}^{\mathrm{T}}_{\text{LS}}\bm{b}_{\text{LS}} \\
\nonumber &\stackrel{(ii)}=& \!\!\!\!\! \text{tr}\left[\sigma^2(\bm{U}_H\bm{U}^{\mathrm{T}}_H - \bm{U}_r\bm{U}^{\mathrm{T}}_r) \right] \\
&=& \!\!\!\!\! \sigma^2(p-r),
\label{eq:mse_b}
\end{eqnarray}
where $(i)$ and $(ii)$ follow from \eqref{eq:est_bias_ls}. From \eqref{eq:mse_b}, we see that 
\begin{eqnarray}
\mathbb{E}\{\hat{\bm{b}}_{\text{LS}}^{\mathrm{T}}\hat{\bm{b}}_{\text{LS}} \} =  \sigma^2(p-r) + \bm{b}^{\mathrm{T}}_{\text{LS}}\bm{b}_{\text{LS}}
\end{eqnarray}
which implies that the estimator $\hat{\bm{b}}_{\text{LS}}^{\mathrm{T}}\hat{\bm{b}}_{\text{LS}}$ must be corrected by $-\sigma^2(p-r)$ to be an unbiased estimator of $\bm{b}^{\mathrm{T}}_{\text{LS}}\bm{b}_{\text{LS}}$. This leads to the following estimator for $\text{mse}(r)$:
\begin{eqnarray}
\begin{split}
\hat{\text{mse}}(r) &= \hat{\bm{b}}_{\text{LS}}^{\mathrm{T}}\hat{\bm{b}}_{\text{LS}} - \sigma^2(p-r) + r\sigma^2 \\
&= \hat{\bm{b}}_{\text{LS}}^{\mathrm{T}}\hat{\bm{b}}_{\text{LS}} + \sigma^2(2r - p).
\end{split}
\label{eq:est_mse_r}
\end{eqnarray}
The optimum choice of $r$ is, therefore, given by 
\begin{eqnarray}
\begin{split}
r^{\ast} &= \arg\min_{r} \hat{\text{mse}}(r) \\
&\stackrel{(iii)}= \arg\min_{r} \left[\sum_{j = r+1}^{p}\norm{\bm{u}^{\mathrm{T}}_{(j)}\bm{y}}^2 + \sigma^2(2r - p)\right],
\end{split}
\label{eq:optimum_r_mod}
\end{eqnarray}
where $(iii)$ follows from \eqref{eq:est_bias_ls}. This completes the proof of \thrmref{thrm:reduced_ls}.
\end{proof}

From \eqref{eq:optimum_r_mod} it is clear that the eigenvectors of $\bm{U}_H$ should be ordered such that 
\begin{eqnarray}
\norm{\bm{u}^{\mathrm{T}}_{(1)}\bm{y}}^2 \geq \cdots \geq \norm{\bm{u}^{\mathrm{T}}_{(r)}\bm{y}}^2 \geq \cdots \geq \norm{\bm{u}^{\mathrm{T}}_{(p)}\bm{y}}^2,
\label{eq:order_ls}
\end{eqnarray}
and the dominant $r$ eigenvectors should be used to construct the rank$-r$ projector $\bm{U}_r\bm{U}^{\mathrm{T}}_r$. Thus, as we remarked in \secref{sec:introduction}, ordering emerges naturally when the bias-variance tradeoff is exploited for the LS estimator. 

Solving \eqref{eq:optimum_r_mod} numerically provides the optimal $r^{\ast}$. The reduced-rank estimator is obtained by conveniently discarding $(p - r^{\ast})$ columns in the matrix $\bm{U}_H = \left[\bm{u}_1, \dots, \bm{u}_p \right] \in \mathbb{R}^{N \times p}$, and is given by $\hat{\bm{x}}_{r^{\ast}, {\text{LS}}} \triangleq \bm{U}_{r^{\ast}}\bm{U}^{\mathrm{T}}_{r^{\ast}}\bm{y}$, whose sum of bias plus variance is smaller than the variance of the unbiased estimator $\hat{\bm{x}}_{\text{LS}} = \bm{H}(\bm{H}^{\mathrm{T}}\bm{H})^{-1}\bm{H}^{\mathrm{T}}\bm{y}$ given by \eqref{eq:ls_x}. 

The reduced-rank estimator $\hat{\bm{x}}_{r, {\text{LS}}}$ has a lower variance at the expense of a higher bias compared to its full-rank counterpart $\hat{\bm{x}}_{\text{LS}}$. However, complete knowledge of the measurement matrix $\bm{H}$ is assumed which is unreasonable. In fact, in many practical situations, this assumption is invalid. For example, in channel estimation for wireless communication networks, the network should be observed for a long period of time to obtain an estimate of the system matrix which could be practically infeasible \cite[Chapter 9.1]{Kim2015}. Similarly, delays in obtaining an accurate estimate of the system matrix could severely impact time synchronization for consensus in wireless sensor networks \cite{Serpedin2009}. Therefore, assuming perfect knowledge of $\bm{H}$ could be inaccurate. Thus, one is restricted to work on the noisy version of matrix $\bm{H}$. In the next section, we consider probabilistic knowledge of $\bm{H}$ leading to the TLS model and show why it does not permit a reduced-rank solution. 

\section{Reduced-rank analysis of the TLS problem}\label{sec:reduced_tls}
In the TLS problem, the entries of the system matrix $\bm{H}$ considered in the linear model \eqref{eq:linear_model} are corrupted by noise leading to the following observations model:
\begin{eqnarray}
\begin{split}
\bm{y} &= \bm{x} + \bm{n} = \bm{H}\bm{\theta} + \bm{n}, \\
\tilde{\bm{H}} &= \bm{H} + \bm{E},
\end{split}
\label{eq:tls}
\end{eqnarray}
where the rows of the $N \times p$ matrix $\bm{E}$ of errors are sampled from the distribution $\mathcal{N}(\bm{0}_p, \sigma^2\bm{I}_p)$, where $\bm{0}_p$ and $\bm{I}_p$ denote the $p \times 1$ vector of all zeros and the $p \times p$ identity matrix, respectively. The distribution of $\bm{y}$ is given by $\bm{y} \sim \mathcal{N}(\bm{H}\bm{\theta}, \sigma^2 [1 + \bm{\theta}^{\mathrm{T}}\bm{\theta}]\bm{I}_N)$. Given $\bm{y}$ and $\tilde{\bm{H}}$, the TLS minimizes a sum of the squared normalized residuals expressed as follows \cite{Huffel1991}:
\begin{eqnarray}
\hat{\bm{\theta}}_{\text{TLS}} = \arg\min_{\bm{\theta}} \frac{\norm{\tilde{\bm{H}}\bm{\theta} - \bm{y}}^2}{\norm{\bm{\theta}}^2 + 1}.
\label{eq:tls_min}
\end{eqnarray}

Let us hypothesize the existence of a low rank estimator of the form
\begin{eqnarray}
\hat{\bm{x}}_{q, {\text{TLS}}} \triangleq \bm{U}_q\bm{U}^{\mathrm{T}}_q\bm{y},
\label{eq:reduced_tls}
\end{eqnarray}
where the $N \times q$ matrix $\bm{U}_q = [\bm{u}_{(1)}, \dots, \bm{u}_{(q)}]$ is obtained by discarding $(p - q)$ orthogonal vectors that comprise $\bm{U}_s$, and $\bm{u}_{(j)}$ denotes the $j^{\text{th}}$ ordered orthogonal vector which is not necessarily the $j^{\text{th}}$ vector. The notion of ordering is similar to the one introduced in \secref{subsec:reduced_ls}; however, it will soon become evident that such a low rank estimator does not exist (except under certain conditions) and that the ordering mechanism similar to the one shown in \eqref{eq:order_ls} cannot be realized for the TLS problem. 

\begin{thrm} 
Consider the TLS formulation given by \eqref{eq:tls}, and hypothesize that there exists a low rank estimator $\hat{\bm{x}}_{q, {\text{TLS}}}$ given by \eqref{eq:reduced_tls}. There exists no $q < p$ (except in some special cases which will be explained in \secref{sec:conclusion}) independent of the unknown parameter $\bm{\theta}$ that minimizes the mean squared error between $\bm{x}$ and $\hat{\bm{x}}_{q, {\text{TLS}}}$. Thus, the hypothesis of existence of a low rank estimator is false.
\label{thrm:reduced_tls}
\end{thrm}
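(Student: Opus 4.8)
The plan is to mirror the reduced-rank LS derivation of \secref{subsec:reduced_ls} step by step, substitute the TLS second-order statistics, and track the exact point at which the unknown parameter $\bm{\theta}$ fails to cancel. Since the effective perturbation on $\bm{y}$ is $\bm{n} - \bm{E}\bm{\theta}$, its per-component variance is $\sigma^2(1+\norm{\bm{\theta}}^2)$, giving $\bm{y} \sim \mathcal{N}(\bm{H}\bm{\theta}, \sigma^2(1+\bm{\theta}^{\mathrm{T}}\bm{\theta})\bm{I}_N)$ as already recorded above \eqref{eq:tls_min}. Repeating the chain of algebra that produced \eqref{eq:mse_r}, every term in $\bm{x}$ collapses exactly as in the LS case, but the noise quadratic form now contributes $\text{tr}[\sigma^2(1+\norm{\bm{\theta}}^2)\bm{U}_q\bm{U}_q^{\mathrm{T}}] = q\sigma^2(1+\norm{\bm{\theta}}^2)$, so that
\[
\text{mse}(q) = \sum_{j=q+1}^{p}\norm{\bm{u}_{(j)}^{\mathrm{T}}\bm{x}}^2 + q\sigma^2(1+\norm{\bm{\theta}}^2).
\]
The distortion term is structurally identical to LS; it is the variance term that now carries the factor $(1+\norm{\bm{\theta}}^2)$, and this is the seed of the obstruction.

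Next I would attempt to reproduce the construction that made the LS rule \eqref{eq:optimum_r_mod} computable from data. Forming $\hat{\bm{b}}_{\text{TLS}} = (\bm{U}_s\bm{U}_s^{\mathrm{T}} - \bm{U}_q\bm{U}_q^{\mathrm{T}})\bm{y}$ and repeating \eqref{eq:mse_b}, its covariance picks up the same scale, so the correction required to de-bias $\hat{\bm{b}}_{\text{TLS}}^{\mathrm{T}}\hat{\bm{b}}_{\text{TLS}}$ is $-\sigma^2(1+\norm{\bm{\theta}}^2)(p-q)$, and the analogue of \eqref{eq:est_mse_r} reads
\[
\hat{\text{mse}}(q) = \sum_{j=q+1}^{p}\norm{\bm{u}_{(j)}^{\mathrm{T}}\bm{y}}^2 + \sigma^2(1+\norm{\bm{\theta}}^2)(2q-p).
\]
Comparing adjacent ranks gives $\hat{\text{mse}}(q) - \hat{\text{mse}}(q-1) = -\norm{\bm{u}_{(q)}^{\mathrm{T}}\bm{y}}^2 + 2\sigma^2(1+\norm{\bm{\theta}}^2)$, so the $q$-th ordered vector is retained exactly when $\norm{\bm{u}_{(q)}^{\mathrm{T}}\bm{y}}^2 > 2\sigma^2(1+\norm{\bm{\theta}}^2)$. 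Where the LS rule compared the ordered coefficients against the known constant $2\sigma^2$, the TLS retention threshold is an increasing function of the unknown $\norm{\bm{\theta}}^2$.

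To close the impossibility I would argue on two levels. First, the correction term $\sigma^2(1+\norm{\bm{\theta}}^2)(2q-p)$ is not a function of the observables $\bm{y}$, $\tilde{\bm{H}}$ and the known $\sigma^2$ alone, so --- unlike \eqref{eq:est_mse_r} --- no unbiased estimate of $\text{mse}(q)$, and hence no purely data-driven selection rule, can be assembled without $\bm{\theta}$. Second, to rule out a fixed $\bm{\theta}$-independent rank I would show the minimizer genuinely moves with the parameter: as $\norm{\bm{\theta}}^2$ ranges over $[0,\infty)$ the threshold $2\sigma^2(1+\norm{\bm{\theta}}^2)$ sweeps $(2\sigma^2,\infty)$, so two parameters of different norm generically fall on opposite sides of a given ordered coefficient and select different ranks; hence no single $q < p$ minimizes $\text{mse}(q)$ for all $\bm{\theta}$, which is the assertion of the theorem. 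The same computation isolates the special case promised in \secref{sec:conclusion}: if $\bm{\theta}$ is norm-constrained so that $1+\norm{\bm{\theta}}^2$ is known, the threshold becomes computable and the LS-style rule survives verbatim with $\sigma^2$ replaced by $\sigma^2(1+\norm{\bm{\theta}}^2)$.

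I expect the last step to be the main obstacle, because the delicate point is to separate a genuine structural impossibility from the weaker complaint that the rule merely mentions $\bm{\theta}$. The care is that $\bm{y}$ and the ordered vectors $\bm{u}_{(j)}$ are themselves functions of $\bm{\theta}$ and, through $\tilde{\bm{H}} = \bm{H}+\bm{E}$, of the random errors; the argument must therefore compare optimal ranks across parameter values rather than treat the coefficients $\norm{\bm{u}_{(q)}^{\mathrm{T}}\bm{y}}^2$ as frozen. It is precisely the $\bm{\theta}$-dependence of the threshold, and not that of the coefficients, that can be neither absorbed into the ordering nor estimated away --- which is what makes the reduced-rank recipe of \secref{subsec:reduced_ls} break down for TLS.
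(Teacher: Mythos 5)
Your proposal is correct and takes essentially the same route as the paper's own proof: mirror the reduced-rank LS derivation, form $\text{mse}(q)$ and the associated order-selection rule, and observe that an irremovable factor of $(1+\bm{\theta}^{\mathrm{T}}\bm{\theta})$ makes the optimal rank depend on the unknown $\bm{\theta}$, with the norm-constrained case as the surviving exception --- precisely the content of \eqref{eq:optimum_q}. The only differences are bookkeeping (you attach the $(1+\norm{\bm{\theta}}^2)$ factor to the variance term of $\text{mse}(q)$ and sum to $p$, while the paper keeps $q\sigma^2$ there, sums over the $p+1$ singular vectors of the augmented matrix $[\tilde{\bm{H}}~~\bm{y}]$, and lets the factor enter through the full-rank TLS benchmark) plus your closing argument that the minimizer genuinely moves with $\norm{\bm{\theta}}^2$, which strengthens rather than departs from the paper's conclusion, since the paper stops at noting the $\bm{\theta}$-dependence of $q^{\ast}$.
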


\begin{proof}
For clarity of exposition, we will proceed along the lines of the LS problem considered in \secref{sec:reduced_ls}. To obtain the TLS estimates of the parameter $\bm{\theta}$, the signal $\bm{x}$ and the noise $\bm{n}$, we first obtain the SVD of the $N \times (p+1)$ augmented matrix $\bm{A} \triangleq [\tilde{\bm{H}}~~\bm{y}]$ given by $\bm{A} = \bm{U}_A \Sigma_A \bm{V}_A^{\mathrm{T}}$, where $\bm{U}_A = \left[\bm{u}_{A,1}, \dots, \bm{u}_{A,p+1} \right] \in \mathbb{R}^{N \times (p+1)}$, $\bm{V}_A = \left[\bm{v}_{A,1}, \dots, \bm{v}_{A,p+1} \right] \in \mathbb{R}^{p+1 \times p+1}$
are orthogonal matrices, and $\bm{\Sigma}_A = \diag \left[\gamma_{A,1}, \dots, \gamma_{A,p+1} \right] \in \mathbb{R}^{p+1 \times p+1}$
is a diagonal matrix comprising the singular values $\gamma_{A,1} \geq \dots \geq \gamma_{A,p+1}$. 

The orthogonal matrix $\bm{U}_A$ can be written as $\bm{U}_A = [\bm{U}_s~~ \bm{u}_s]$, where $\bm{u}_s$ is the singular vector corresponding to the smallest singular value of $\bm{A}$. The TLS estimates of $\bm{\theta}$, $\bm{x}$ and $\bm{n}$ are expressed in terms of $\bm{U}_s$ as follows:
\begin{eqnarray}
\hat{\bm{\theta}}_{\text{TLS}} &=& (\tilde{\bm{H}}^{\mathrm{T}}\bm{U}_s\bm{U}^{\mathrm{T}}_s\tilde{\bm{H}})^{-1}\tilde{\bm{H}}^{\mathrm{T}}\bm{U}_s\bm{U}^{\mathrm{T}}_s\bm{y},  \label{eq:tls_theta} \\
\hat{\bm{x}}_{\text{TLS}} &=& \bm{U}_s\bm{U}^{\mathrm{T}}_s\bm{y}, \label{eq:tls_x} \\
\hat{\bm{n}}_{\text{TLS}} &=& (\bm{I} - \bm{U}_s\bm{U}^{\mathrm{T}}_s)\bm{y}. \label{eq:tls_n}
\end{eqnarray}
The estimated (or, corrected) system matrix is given by $\hat{\bm{H}} = \bm{U}_s\bm{U}^{\mathrm{T}}_s\tilde{\bm{H}}$. The estimator $\hat{\bm{x}}_{\text{TLS}} \sim \mathcal{N} (\hat{\bm{H}} \bm{\theta},  \sigma^2 [1 + \bm{\theta}^{\mathrm{T}}\bm{\theta}]\bm{U}_s\bm{U}^{\mathrm{T}}_s)$. That is, $\hat{\bm{x}}_{\text{TLS}}$ is a biased estimator of $\bm{x}$ unlike $\hat{\bm{x}}_{\text{LS}}$ [see \eqref{eq:ls_x_dist}]. The mean squared error of the TLS estimate is 
\begin{eqnarray}
[\norm{\hat{\bm{H}} \bm{\theta} - \bm{H} \bm{\theta}}^2 + \sigma^2 (1 + \bm{\theta}^{\mathrm{T}}\bm{\theta})p].
\end{eqnarray}   

For notational convenience, we let $\bm{x}_q = \bm{U}_q\bm{U}^{\mathrm{T}}_q\tilde{\bm{H}}\bm{\theta}$. It is seen that $\bm{x}_q \sim \mathcal{N}(\bm{U}_q\bm{U}^{\mathrm{T}}_q\bm{H}\bm{\theta}, \sigma^2\bm{\theta}^{\mathrm{T}}\bm{\theta}\bm{U}_q\bm{U}^{\mathrm{T}}_q)$, while the estimator $\hat{\bm{x}}_{q, {\text{TLS}}} \sim \mathcal{N}(\bm{x}_q, \sigma^2 [1 + \bm{\theta}^{\mathrm{T}}\bm{\theta}]\bm{U}_q\bm{U}^{\mathrm{T}}_q\bm{I}_N\bm{U}_q\bm{U}^{\mathrm{T}}_q)$ is a compound distribution which simplifies to 
\begin{align}
\hat{\bm{x}}_{q, {\text{TLS}}} \sim \mathcal{N}(\bm{U}_q\bm{U}^{\mathrm{T}}_q\bm{H}\bm{\theta}, \sigma^2 [1 + \bm{\theta}^{\mathrm{T}}\bm{\theta}]\bm{U}_q\bm{U}^{\mathrm{T}}_q \!\! +\!\! \sigma^2\bm{\theta}^{\mathrm{T}}\bm{\theta}\bm{U}_q\bm{U}^{\mathrm{T}}_q).
\end{align}

The mean-squared error of the reduced-rank estimator $\hat{\bm{x}}_{q, {\text{TLS}}}$ is given by
\begin{eqnarray}
\nonumber \text{mse}(q)\!\!\! &=&\!\!\! \mathbb{E}\left\{[\bm{x} - \hat{\bm{x}}_{q,\text{TLS}}]^{\mathrm{T}}[\bm{x} - \hat{\bm{x}}_{q,\text{TLS}}] \right\} \\
\nonumber  &=&\!\!\! \mathbb{E}\left\{[\bm{x}^{\mathrm{T}}\bm{x} - \bm{x}^{\mathrm{T}}\hat{\bm{x}}_{q,\text{TLS}} - \hat{\bm{x}}_{q,\text{TLS}}^{\mathrm{T}}\bm{x} + \hat{\bm{x}}_{q,\text{TLS}}^{\mathrm{T}}\hat{\bm{x}}_{q,\text{TLS}}] \right\} \\
\nonumber  &=&\!\!\! \mathbb{E}\left\{\bm{x}^{\mathrm{T}}\bm{x}\right\} - \mathbb{E}\left\{\bm{x}^{\mathrm{T}}\hat{\bm{x}}_{q,\text{TLS}} \right\} - \mathbb{E}\left\{\hat{\bm{x}}_{q,\text{TLS}}^{\mathrm{T}}\bm{x} \right\}\\ \nonumber && +  \mathbb{E}\left\{\hat{\bm{x}}_{q,\text{TLS}}^{\mathrm{T}}\hat{\bm{x}}_{q,\text{TLS}} \right\} \\
\nonumber  &=&\!\!\! \bm{x}^{\mathrm{T}}\bm{x} - \bm{x}^{\mathrm{T}}\bm{U}_q\bm{U}^{\mathrm{T}}_q\bm{x} - \bm{x}^{\mathrm{T}}\bm{U}_q\bm{U}^{\mathrm{T}}_q\bm{x} \\ \nonumber && +  \mathbb{E}\left\{\hat{\bm{x}}_{q,\text{TLS}}^{\mathrm{T}}\hat{\bm{x}}_{q,\text{TLS}} \right\}\\
\nonumber  &=&\!\!\! \bm{x}^{\mathrm{T}}\bm{x} - \bm{x}^{\mathrm{T}}\bm{U}_q\bm{U}^{\mathrm{T}}_q\bm{x} - \bm{x}^{\mathrm{T}}\bm{U}_q\bm{U}^{\mathrm{T}}_q\bm{x} \\ \nonumber && + \mathbb{E}\left\{\bm{y}^{\mathrm{T}}\bm{U}_q\bm{U}^{\mathrm{T}}_q\bm{y}\right\}\\
\nonumber  &=&\!\!\! \bm{x}^{\mathrm{T}}\bm{x}  - \bm{x}^{\mathrm{T}}\bm{U}_q\bm{U}^{\mathrm{T}}_q\bm{x} - \bm{x}^{\mathrm{T}}\bm{U}_q\bm{U}^{\mathrm{T}}_q\bm{x} \\ \nonumber && +  \mathbb{E}\left\{(\bm{x} + \bm{n})^{\mathrm{T}}\bm{U}_q\bm{U}^{\mathrm{T}}_q(\bm{x} + \bm{n})\right\}\\
\nonumber  &=&\!\!\! \bm{x}^{\mathrm{T}}\bm{U}_A\bm{U}^{\mathrm{T}}_A\bm{x} - \bm{x}^{\mathrm{T}}\bm{U}_q\bm{U}^{\mathrm{T}}_q\bm{x} - \bm{x}^{\mathrm{T}}\bm{U}_q\bm{U}^{\mathrm{T}}_q\bm{x} \\ \nonumber &&  + \mathbb{E}\left\{\bm{x}^{\mathrm{T}} \bm{U}_q\bm{U}^{\mathrm{T}}_q\bm{x}  +  \bm{x}^{\mathrm{T}} \bm{U}_q\bm{U}^{\mathrm{T}}_q\bm{n} + \bm{n}^{\mathrm{T}} \bm{U}_q\bm{U}^{\mathrm{T}}_q\bm{x}\right. \\ \nonumber && \left.  + \bm{n}^{\mathrm{T}} \bm{U}_q\bm{U}^{\mathrm{T}}_q\bm{n}\right\} \\
\nonumber  &=&\!\!\!  \bm{x}^{\mathrm{T}}\bm{U}_A\bm{U}^{\mathrm{T}}_A\bm{x} - \bm{x}^{\mathrm{T}} \bm{U}_q\bm{U}^{\mathrm{T}}_q\bm{x} + \mathbb{E}\left\{\bm{n}^{\mathrm{T}} \bm{U}_q\bm{U}^{\mathrm{T}}_q\bm{n}\right\} \\
\nonumber  &=&\!\!\!   \sum_{j = 1}^{p+1}\norm{\bm{u}^{\mathrm{T}}_{(j)}\bm{x}}^2 - \sum_{j = 1}^{q}\norm{\bm{u}^{\mathrm{T}}_{(j)}\bm{x}}^2 + q \sigma^2 \\
&=&\!\!\!   \sum_{j = q+1}^{p+1}\norm{\bm{u}^{\mathrm{T}}_{(j)}\bm{x}}^2 + q \sigma^2.
\label{eq:mse_q}
\end{eqnarray}
The rank reduction procedure will reduce the variance of the TLS estimator of $\bm{x}$ whenever 
\begin{eqnarray}
\nonumber [\norm{\hat{\bm{H}} \bm{\theta} - \bm{H} \bm{\theta}}^2 + \sigma^2 (1 + \bm{\theta}^{\mathrm{T}}\bm{\theta})p]  > \sum_{j = q+1}^{p+1}\norm{\bm{u}^{\mathrm{T}}_{(j)}\bm{x}}^2 + q \sigma^2,
\label{eq:reduce__tls_when1}
\end{eqnarray}
which can be written as 
\begin{align}
\nonumber p \sigma^2   &> \frac{1}{[1 + \bm{\theta}^{\mathrm{T}}\bm{\theta}]}\left[\sum_{j = q+1}^{p+1}\norm{\bm{u}^{\mathrm{T}}_{(j)}\bm{x}}^2 + q \sigma^2 - \norm{\hat{\bm{H}} \bm{\theta} - \bm{H} \bm{\theta}}^2\right]  \\
\nonumber  &= \frac{1}{[1 + \bm{\theta}^{\mathrm{T}}\bm{\theta}]}\left[\sum_{j = q+1}^{p+1}\norm{\bm{u}^{\mathrm{T}}_{(j)}\bm{x}}^2 + q \sigma^2 - \norm{\hat{\bm{H}} \bm{\theta} - \bm{x}}^2\right] \\
&> \frac{1}{[1 + \bm{\theta}^{\mathrm{T}}\bm{\theta}]}\left[\sum_{j = q+1}^{p+1}\norm{\bm{u}^{\mathrm{T}}_{(j)}\bm{x}}^2 + q \sigma^2 \right].
\label{eq:reduce__tls_when2}
\end{align}
The optimum choice of $q$ can be obtained by solving the following optimization setup:
\begin{eqnarray}
\nonumber q^{\ast} &=& \arg\min_{q} \frac{1}{[1 + \bm{\theta}^{\mathrm{T}}\bm{\theta}]}\Biggl[\sum_{j = q+1}^{p+1}\norm{\bm{u}^{\mathrm{T}}_{(j)}\bm{y}}^2 + \Biggr. \\ && \Biggl. \sigma^2[1 + \bm{\theta}^{\mathrm{T}}\bm{\theta}](2q + p)\Biggr].
\label{eq:optimum_q}
\end{eqnarray}
However, as can be seen in \eqref{eq:optimum_q}, $q^{\ast}$ depends on the parameter $\bm{\theta}$ which is unknown. Thus, the hypothesis of the existence of the low rank estimator is false. In general, it is not possible to construct a low rank estimator of the form \eqref{eq:reduced_tls} that exploits the bias-variance tradeoff for the TLS formulation. This completes the proof of \thrmref{thrm:reduced_tls}.
\end{proof}
The main essence of \thrmref{thrm:reduced_tls} is that, the estimator $\hat{\bm{\theta}}_{\text{TLS}}$ given by \eqref{eq:tls_min} does not permit rank reduction to enable the bias-variance tradeoff for the TLS formulation specified in \eqref{eq:tls}. However, there are certain applications where constraints on the norm of the parameter to be estimated plays a critical role in the estimation problem; see, for example, \cite{Douglas2000} and \cite{Gotoh2011}. For instance, when $\bm{\theta}^{\mathrm{T}}\bm{\theta} \leq C$, where $C > 0$ is some constant, equation \eqref{eq:optimum_q} can be solved for $q^{\ast}$. Once the optimal $q = q^{\ast}$ is obtained, similar to the reduced-rank LS estimator, the eigenvectors of $\bm{U}_s$ should be ordered such that 
$\norm{\bm{u}^{\mathrm{T}}_{(1)}\bm{y}}^2 \geq \cdots \geq \norm{\bm{u}^{\mathrm{T}}_{(q)}\bm{y}}^2 \geq \cdots \geq \norm{\bm{u}^{\mathrm{T}}_{(p+1)}\bm{y}}^2$,
and the dominant $q$ eigenvectors can be used to construct the rank$-q$ projector $\bm{U}_q\bm{U}^{\mathrm{T}}_q$ in order to derive the low-rank estimator $\hat{\bm{x}}_{q, {\text{TLS}}} \triangleq \bm{U}_q\bm{U}^{\mathrm{T}}_q\bm{y}$.

\section{Remarks}\label{sec:conclusion}
The reduced-rank method exploits the bias-variance tradeoff to yield an estimator having a lower bias-plus-variance compared to its unbiased counterpart. It also results in an ordering principle to select the most informative eigenvectors of the measurement matrix and conveniently discard the least informative ones. For the TLS formulation, we showed that rank reduction is, in general, infeasible owing to the structure of the problem. In the LS problem, the unknown parameter appears only in the mean of the observations [see \eqref{eq:linear_model}], thereby providing scope to exploit the bias-variance tradeoff. On the other hand, in the TLS formulation, the unknown parameter appears in both the mean and variance of the observations [note that, $\bm{y} \sim \mathcal{N}(\bm{H}\bm{\theta}, \sigma^2 [1 + \bm{\theta}^{\mathrm{T}}\bm{\theta}]\bm{I}_N)$], thus the notion of bias-variance tradeoff may not hold as straightforwardly as it does for the LS problem. Lastly, in the LS problem, the optimum $r^{\ast}$ can easily be obtained by solving \eqref{eq:optimum_r_mod} numerically. In case of TLS, attempting to obtain $q^{\ast}$ by solving the optimization setup in \eqref{eq:optimum_q} numerically is futile because of the unknown parameter $\bm\theta$ in the expression.


\section*{Acknowledgement}
This work was supported by the U.S. NSF under grant number NSF CNS$-1723856$. The first author thanks M. Fau{\ss} and A. Dytso for their comments on an earlier draft of the paper. 

\bibliographystyle{IEEEtran}
\bibliography{IEEEabrv,reducedtls}
\raggedbottom

\end{document}